\definecolor{lightgray}{gray}{0.95}
\def\NAT@spacechar{~}
\theoremstyle{theorem}
\newtheorem{mytheorem}{Theorem}
\newtheorem{mycor}{Corollary}
\theoremstyle{definition}
\theoremstyle{remark}
\newtheorem{myinv}{Observation}
\newtheorem{myclaim}{Claim}
\newcommand{\prob}[5]{%
  \begingroup
  \par\medskip
  \noindent \colorbox{lightgray}{\textsc{#1}}\nopagebreak[4]
  \par\noindent\hangindent=\parindent\textit{#2}  #3
  \par\noindent\hangindent=\parindent\textit{#4}  #5
  \par  \medskip
  \endgroup
}
\newcommand{\decprob}[3]{\prob{#1}{Input:}{#2}{Question:}{#3}}
\DeclareMathOperator{\tw}{tw}
\newcommand{\N}{\mathbb{N}}
\newcommand{\msetsc}{\textsc{Minimum Shared Edges}\xspace}
\newcommand{\MSE}{MSE\xspace}
\newcommand{\pmsetsc}{\textsc{Planar Minimum Shared Edges}\xspace}
\newcommand{\PMSE}{\textsc{Planar MSE}\xspace}
\newcommand{\VC}{\textsc{VC}\xspace}
\title{The Minimum Shared Edges Problem on Planar Graphs%
}
\author[1]{Till~Fluschnik\thanks{Till~Fluschnik acknowledges support by the DFG, project DAMM (NI~369/13-2).}}
\author[1]{Manuel~Sorge\thanks{Manuel~Sorge acknowledges support by the DFG, project DAPA (NI-369/12-2).}}
\affil[1]{\small{Institut f\"ur Softwaretechnik und Theoretische Informatik, TU~Berlin, Germany, \texttt{\{till.fluschnik, manuel.sorge\}@tu-berlin.de}}}
\begin{document}

\maketitle
\begin{abstract}
\small{
We study the \textsc{Minimum Shared Edges} problem introduced by Omran et al. [Journal of Combinatorial Optimization, 2015] on planar graphs: \PMSE\ asks, given a planar graph $G=(V,E)$, two distinct vertices $s,t\in V$, and two integers $p,k\in \N$, whether there are $p$ $s$-$t$~paths in $G$ that share at most $k$ edges, where an edges is called shared if it appears in at least two of the $p$ $s$-$t$~paths. We show that \PMSE\ is NP-hard by reduction from \textsc{Vertex Cover}. We make use of a grid-like structure, where the alignment (horizontal/vertical) of the edges in the grid correspond to selection and validation gadgets respectively.
}
\end{abstract}

\vspace{3pt}\noindent\small{\textbf{Keywords:} 
Grids, 
Reductions, 
NP-completeness.
}

\section{Introduction}

We study the following problem.

\decprob{Planar Minimum Shared Edges (Planar MSE)}{An undirected planar graph $G=(V,E)$ with distinct vertices $s,t\in V$, and two integers $p\in \mathbb{N}$ and $k\in \mathbb{N}_0$.}{Are there $p$ $s$-$t$~paths in $G$ that share at most $k$ edges?}
\noindent Herein, an edge is called \emph{shared} if it appears in at least two $s$-$t$ paths of the solution. To clearly distinguish between paths in a solution and ordinary paths in a graph, we also call the paths in a solution \emph{$s$-$t$~routes}. 

\PMSE is the special case of \msetsc (\MSE) in which the input graph is planar. 
\MSE\ was introduced on directed graphs by~\citet{OmranSZ13}. \citet{YeLLZ13} proved that the problem is solvable in polynomial-time on graphs of bounded treewidth. 
\citet{FluschnikKNS15} proved that \MSE is fixed-parameter tractable (FPT) with respect to the number~$p$ of desired $s$-$t$~routes, that is, there is an algorithm solving \MSE\ in $f(p)\cdot n^{O(1)}$~time, where $f$ is a computable function and $n$~denotes the number of vertices in the graph of the input instance of~\MSE. 
Moreover, \citet{FluschnikKNS15} showed that \MSE\ parameterized by the treewidth~$\tw$ and the number~$k$ of shared edges combined is W[1]-hard, that is, it is unlikely that \MSE parameterized by $\tw$ and $k$ combined admits an FPT algorithm.
For more results on \MSE and related work, we refer the reader to~\citet{Flu15}. 

In this paper, we prove the following result.

\begin{mytheorem}\label{theo!1}
\pmsetsc is NP-hard, even on planar graphs of degree at most four.
\end{mytheorem}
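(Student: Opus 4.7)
The plan is to reduce from \textsc{Vertex Cover} (\VC), which is known to be NP-hard even on restricted graph classes. Given an instance $(H,\ell)$ of \VC, I would construct in polynomial time an equivalent instance $(G,s,t,p,k)$ of \pmsetsc in which $G$ is planar and has maximum degree four, with $p$ and $k$ chosen as explicit polynomial functions of $|V(H)|$, $|E(H)|$, and $\ell$. The overall geometry, as already hinted in the abstract, is a two-dimensional grid-like backbone: one horizontal "rail" per vertex of $H$ and one vertical "column" per edge of $H$. The $p$ routes would enter the grid on the left via a small bounded-degree "fan" tree rooted at $s$, traverse the grid, and be collected symmetrically at $t$; the horizontal segments along each rail realize a \emph{selection gadget} for the corresponding vertex, while the vertical segments along each column realize a \emph{validation gadget} for the corresponding edge.

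First, I would design the selection gadget for a vertex $v$ so as to offer two disjoint passage modes to routes crossing it: a "cheap" mode, interpreted as $v$ being in the cover, which costs a fixed, precisely calibrated number of shared edges that many routes may reuse; and an "expensive" mode, interpreted as $v$ not being in the cover, which forces routes to take pairwise disjoint detours and hence can only accommodate a bounded number of routes without additional sharing. Next, I would design the validation gadget for an edge $e=\{u,v\}$ in a column so that the routes passing through that column can be accommodated within the sharing budget only when at least one of the two rails corresponding to $u$ and $v$ is in the cheap state. Finally, I would calibrate $p$ and $k$ so that the total sharing cost is minimized precisely when exactly $\ell$ rails are placed in the cheap state.

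The correctness argument then splits into the two usual directions. For the forward direction, a vertex cover $C\subseteq V(H)$ of size $\ell$ would induce $p$ routes by putting every rail of $C$ into cheap mode and every other rail into expensive mode; because $C$ covers $E(H)$, each validation gadget is satisfied by some cheap rail incident to its column, and the total number of shared edges evaluates to exactly $k$. For the converse, I would show that any solution with at most $k$ shared edges can afford cheap mode on at most $\ell$ rails and that the set of cheap rails must meet every column, yielding a vertex cover of size at most $\ell$ in~$H$.

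I expect the main obstacle to be the joint design of the two gadget families under the combined constraints of (i) an exact arithmetic match between shared-edge count and cover size, (ii) a planar layout that remains crossing-free once all gadgets are superimposed on the grid, and (iii) a global degree bound of four. Constraint (iii) is most delicate near $s$ and $t$, since all $p$ routes must emanate from and reach these vertices; I plan to resolve this by splitting $s$ and $t$ into bounded-degree binary trees whose internal edges are deliberately shared by many routes but whose constant contribution to the sharing budget can be absorbed into $k$. A further technical step will be to replace every crossing between a rail and a column by a small planar widget with at most four ports that correctly propagates the "cheap/expensive" signal across the junction, and to verify planarity by exhibiting an explicit embedding of each gadget compatible with the global grid embedding.
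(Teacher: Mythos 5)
Your outline coincides with the paper's architecture -- a reduction from \VC{} onto a grid whose rows are vertices and whose columns are edges, with selection happening horizontally, validation vertically, and binary trees replacing $s$ and $t$ to bound the degree -- but it is a plan rather than a proof: every load-bearing component is left as ``I would design\dots'' or ``I would calibrate\dots''. The one primitive that makes all of these gadgets simultaneously realizable is missing, namely a \emph{bundle} of $\ell$ internally disjoint paths of length $k'+1$ between two terminals, where $k'$ is the sharing budget. Because each such path is longer than the entire budget, no two routes may share one, so a bundle is a pipe of capacity exactly $\ell$ that costs nothing; prefixing it with a $q$-edge shaft (a \emph{feather}) prices its use by two or more routes at exactly $q$ shared edges. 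With this primitive the calibration you defer becomes concrete: vertical grid edges become single $(k'+1)$-chains (capacity one, so routes essentially cannot change rows inside the grid); $s$ attaches to row $i$ by a feather of shaft length $m^3$ and capacity $M=2(m+1)+2$; the horizontal connection from column $j$ to $j+1$ in row $i$ is a plain edge if $v_i\in e_j$ and a capacity-$M$ feather otherwise; and one sets $p=kM+(n-k)+1$ and $k'=k(m^3+m+1)$. The budget then forces exactly $k$ rows to carry $M$ routes each ($m^3$ per selection dominates the arithmetic), $M$ is large enough that these routes cannot leak out through the capacity-one vertical chains, and the single surplus route combined with a pigeonhole count across each column boundary forces every edge of $H$ to admit a plain (unbounded-capacity) horizontal edge in a selected row -- i.e., the selected rows form a vertex cover. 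Your ``cheap/expensive mode'' intuition is the right semantics, but without a mechanism that turns path multiplicity into hard capacity constraints measured against the budget, neither direction of the equivalence can be verified.

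Two further points on the degree bound. First, the crossing widgets you anticipate are unnecessary: the construction lives on a genuine planar grid, so rails and columns never cross. Second, you overlook where the real degree problem sits: the capacity gadgets themselves. Any bundle of $\ell$ parallel paths has two endpoints of degree $\ell$ (plus the degree-$(n+1)$ vertices $s$ and $t$, which your binary-tree idea correctly handles, including the observation that all tree edges are necessarily shared and can be absorbed into $k'$). The paper resolves the bundle endpoints by replacing each bundle with a \emph{rainbow} (nested arcs between two paths, maximum degree $3$), which changes the number of shared edges; to keep the instance equivalent one first subdivides every edge enough times that any shared maximal degree-two chain contributes a large multiple of the subdivision factor to the count, so the bounded additive slack introduced by the rainbows cannot be spent anywhere else. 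Some device of this kind is indispensable for your claimed degree-four bound and is absent from your proposal.
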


In the proof of~\Cref{theo!1}, we reduce from the following problem.

\decprob{Vertex Cover (VC)}{An undirected graph $G=(V,E)$ and an integer $k\in \mathbb{N}$.}{Is there a subset $W\subseteq V$ of vertices in $G$ with $|W|\leq k$ such that each edge in $G$ is incident with at least one vertex in $W$?}
\VC is one of \citet{Karp72}'s 21 original NP-complete problems.
It is worth mentioning that NP-completeness proof of \MSE due to~\citet{Flu15} implies W[2]-hardness with respect to the number~$k$ of shared edges.
In contrast, our proof of~\Cref{theo!1} does not provide insight into the complexity of \PMSE parameterized by~$k$.

\section{Preliminaries}
By $\mathbb{N}$ we denote the set of positive integers excluding~$0$ and in $\mathbb{N}_0$ we include~$0$. For $n \in \mathbb{N}$ we use~$[n]$ to denote the set~$\{1, \ldots, n\} \subseteq \mathbb{N}$. We use standard graph notation, see \citet{Diestel10}, for example. For the relevant notions of parameterized complexity, we refer the reader to the literature~\cite{Nie06,FG06,DowneyF13,CyganFKLMPPS15}.

The following graphs are used in \Cref{sec:constr}, they are illustrated in \Cref{fig:subgraphs}. An \emph{$m$-chain} is a $P_{m+1}$, that is, a path with $m$~edges.  An \emph{$(\ell,m)$-bundle} is a set of $\ell$ $m$-chains with common endpoints. A \emph{$(q,\ell,m)$-feather} is an $(\ell,m)$-bundle with a $q$-chain attached to exactly one of its endpoints. We also call the $q$-chain of the feather the \emph{$q$-shaft}. An $(\ell,m)$-rainbow is a graph constructed as follows: take two paths $P_{\ell+1}^1$, $P_{\ell+1}^2$, represented as tuple of vertices $(p_1^1,\ldots,p^1_{\ell+1})$ and $(p_1^2,\ldots,p^2_{\ell+1})$, respectively, and, for all $x\in [\ell]$, connect the pair $p^1_x$, $p^2_x$ by an $m$-chain.

\begin{figure}[t]
\centering
\begin{tikzpicture}[x=0.8cm, y=0.8cm]

\node (a) at (0,0)[scale=0.75,circle,draw]{};
\node (b) at (2,0)[scale=0.75,circle,draw]{};
\draw[very thin] (a) to (b);
\draw[dotted, very thick] (a) to (b);

\node (tx1) at (1,-1.75)[]{$m$-chain};

\node (a) at (4-0.5,0)[scale=0.75,circle,draw]{};
\node (b) at (6-0.5,0)[scale=0.75,circle,draw]{};

\foreach \x in {-40,-20,...,40}{
\draw[very thin] (a) to [out=\x,in=180-\x](b);
\draw[dotted, very thick] (a) to [out=\x,in=180-\x](b);
}

\node (tx1) at (5-0.5,-1.75)[]{$(\ell,m)$-bundle};

\draw[decorate, decoration={brace, amplitude=4pt}, thin] (4-0.5,0.75)--(6-0.5,0.75) node[midway,above,label=90:{$m$}]{};
\draw[decorate, decoration={brace, amplitude=4pt}, thin] (6.25-0.5,0.5)--(6.25-0.5,-0.5) node[midway,right,label=0:{$\ell$}]{};

\node (s) at(8-0.25,0)[scale=0.75,circle, draw]{};
\node (a) at (9-0.25,0)[scale=0.33,circle, draw]{};
\node (b) at (12-0.25,0)[scale=0.75,circle,draw]{};

\draw[very thin] (s)  to (a);
\draw[dotted, very thick] (s)  to (a);

\foreach \x in {-40,-20,...,40}{
\draw[very thin] (a) to [out=\x,in=180-0.5*\x](b);
\draw[dotted, very thick] (a) to [out=\x,in=180-0.5*\x](b);
}

\node (tx1) at (10-0.25,-1.75)[]{$(q,\ell,m)$-feather};

\draw[decorate, decoration={brace, amplitude=3pt}, thin] (8-0.25,0.75)--(9-0.1-0.25,0.75) node[midway,above,label=90:{$q$}]{};
\draw[decorate, decoration={brace, amplitude=4pt}, thin] (9+0.1-0.25,0.75)--(12-0.25,0.75) node[midway,above,label=90:{$m$}]{};
\draw[decorate, decoration={brace, amplitude=4pt}, thin] (12.25-0.25,0.5)--(12.25-0.25,-0.5) node[midway,right,label=0:{$\ell$}]{};

\node (s) at(14,0)[scale=0.75,circle, draw]{};
\node (t) at (18,0)[scale=0.75,circle,draw]{};

\draw (s) to (15.75,0);
\draw (t) to (16.25,0);

\foreach \x in {0.5,1,...,3}{
\node (a) at (16-0.5*\x,0)[scale=0.33,circle,draw]{};
\node (b) at (16+0.5*\x,0)[scale=0.33,circle,draw]{};
\draw[very thin] (a) to [out=90,in=90](b);
\draw[dotted, very thick] (a) to [out=90,in=90](b);
}

\node (tx1) at (16,-1.75)[]{$(\ell,m)$-rainbow};
\draw[decorate, decoration={brace, amplitude=4pt}, thin] (15.75,-0.25)--(14,-0.25) node[midway,below,label=-90:{$\ell$}]{};

\end{tikzpicture}
\caption{Illustration of a chain, bundle, feather and rainbow.}\label{fig:subgraphs}
\end{figure}
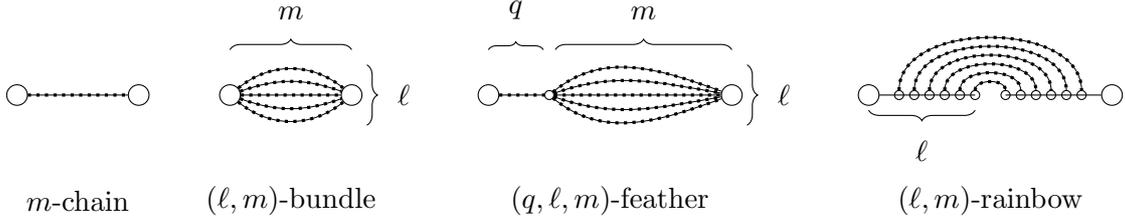

\paragraph{Grids}
For $a,b\in \mathbb{N}$, we denote by $\boxplus_{a,b}$ the graph with vertex set $\{(i,j)\mid i\in[a],j\in[b]\}$ and edge set $\{\{(i,j),(k,\ell)\}\mid |i-k|+|j-\ell|=1\}$. We call $\boxplus_{a,b}$ the $a\times b$-grid. For $i\in[a]$, we call the subgraph $R_i$ in graph $\boxplus_{a,b}$ induced by the vertex set $\{(i,j)\mid j\in[b]\}$ the \emph{row $i$}. For $j\in [b]$, we call the subgraph $C_i$ in graph $\boxplus_{a,b}$ induced by the vertex set $\{(i,j)\mid i\in[a]\}$ the \emph{column $j$}. We call the family $\{R_1,\ldots,R_a\}$ the \emph{rows} of $\boxplus_{a,b}$, and we call the family $\{C_1,\ldots,C_b\}$ the \emph{columns} of $\boxplus_{a,b}$. We denote the edges in the rows by \emph{horizontal} edges and the edges in the columns by \emph{vertical} edges.

\section{Proof of \Cref{theo!1}}\label{sec:theo!1}

We now prove \Cref{theo!1}. Given an instance of \VC{} we first describe how to obtain an equivalent instance of \PMSE{} containing a planar graph of arbitrary maximum degree. We then show how to reduce the maximum degree to at most~$4$.

\paragraph{Construction}\label{sec:constr}

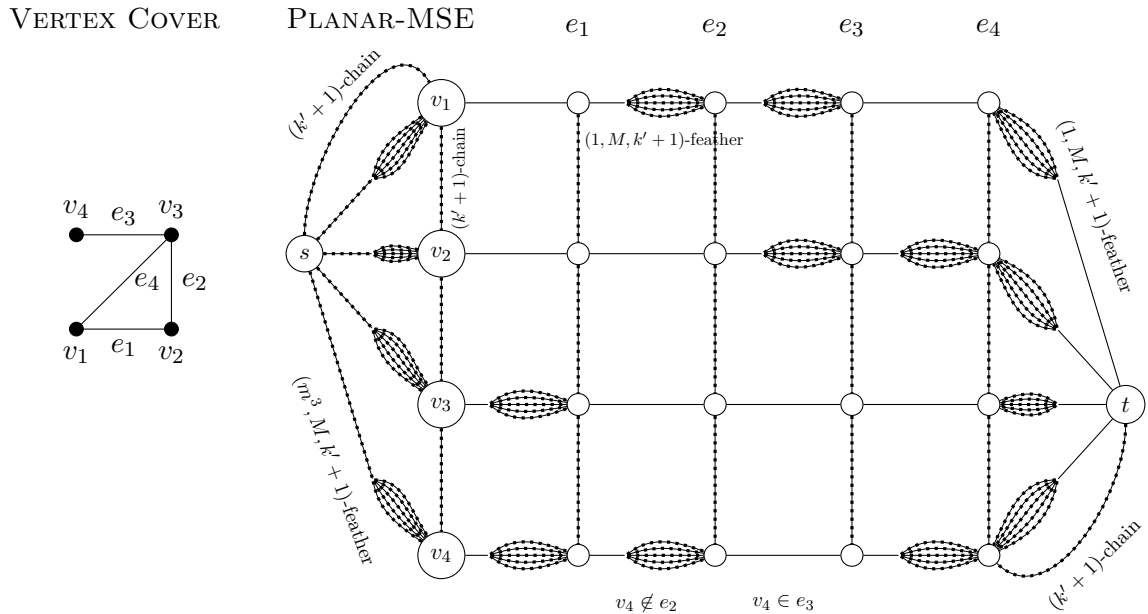
\begin{figure}[t]
\centering
\begin{tikzpicture}

\def\xl{-1};
\def\yl{-1};
\def\xr{2};

\def\vcx{1.25}
\tikzstyle{nodevc}=[circle,fill, scale=1/2, draw];

\node (v1) at (\xl+0,0+\yl)[nodevc,label=-90:{$v_1$}]{};
\node (v2) at (\xl+\vcx,0+\yl)[nodevc,label=-90:{$v_2$}]{};
\node (v3) at (\xl+\vcx,\vcx+\yl)[nodevc,label=90:{$v_3$}]{};
\node (v4) at (\xl+0,\vcx+\yl)[nodevc,label=90:{$v_4$}]{};

\draw (v1) -- node[midway, below]{$e_1$}(v2) -- node[midway, right]{$e_2$}(v3) -- node[midway, above]{$e_3$}(v4);
\draw (v1) -- node[midway, right]{$e_4$}(v3);

\node (lbl) at (\xl+.4*\vcx,2.5*\vcx)[]{\textsc{Vertex Cover}};


\tikzstyle{nodemse}=[circle, scale=1/1.25, draw];
\tikzstyle{lblnode}=[scale=0.7]
\tikzstyle{lblnodein}=[scale=0.6]

\def\msey{2}
\def\msex{1.8}
\def\txt{0.5}
\node (lbl) at (\xr+1,2.5*\vcx)[]{\textsc{Planar-MSE}};

\node (s) at (\xr+0,0)[nodemse]{$s$};
\node (t) at (\xr+6*\msex,-\msey)[nodemse]{$t$};

\node (v1) at (\xr+\msex,0+1*\msey)[nodemse]{$v_1$};
\node (v2) at (\xr+\msex,0-0*\msey)[nodemse]{$v_2$};
\node (v3) at (\xr+\msex,0-1*\msey)[nodemse]{$v_3$};
\node (v4) at (\xr+\msex,0-2*\msey)[nodemse]{$v_4$};

\node (v1e1) at (\xr+2*\msex,0+1*\msey)[nodemse]{};
\node (v2e1) at (\xr+2*\msex,0-0*\msey)[nodemse]{};
\node (v3e1) at (\xr+2*\msex,0-1*\msey)[nodemse]{};
\node (v4e1) at (\xr+2*\msex,0-2*\msey)[nodemse]{};

\node (v1e2) at (\xr+3*\msex,0+1*\msey)[nodemse]{};
\node (v2e2) at (\xr+3*\msex,0-0*\msey)[nodemse]{};
\node (v3e2) at (\xr+3*\msex,0-1*\msey)[nodemse]{};
\node (v4e2) at (\xr+3*\msex,0-2*\msey)[nodemse]{};

\node (v1e3) at (\xr+4*\msex,0+1*\msey)[nodemse]{};
\node (v2e3) at (\xr+4*\msex,0-0*\msey)[nodemse]{};
\node (v3e3) at (\xr+4*\msex,0-1*\msey)[nodemse]{};
\node (v4e3) at (\xr+4*\msex,0-2*\msey)[nodemse]{};

\node (v1e4) at (\xr+5*\msex,0+1*\msey)[nodemse]{};
\node (v2e4) at (\xr+5*\msex,0-0*\msey)[nodemse]{};
\node (v3e4) at (\xr+5*\msex,0-1*\msey)[nodemse]{};
\node (v4e4) at (\xr+5*\msex,0-2*\msey)[nodemse]{};

\node (lbl) at (\xr+2*\msex,0+1.5*\msey)[]{$e_1$};
\node (lbl) at (\xr+3*\msex,0+1.5*\msey)[]{$e_2$};
\node (lbl) at (\xr+4*\msex,0+1.5*\msey)[]{$e_3$};
\node (lbl) at (\xr+5*\msex,0+1.5*\msey)[]{$e_4$};


\node (h) at (\xr+0.5*\msex,0.5*\msey)[circle, scale=1/10]{};
\draw[very thin] (s)  to (h);
\draw[dotted, very thick] (s)  to (h);

\foreach \x in {-40,-20,...,40}{
\draw[very thin] (h) to [out=\x+40,in=225-0.25*\x](v1);
\draw[dotted, very thick] (h)to [out=\x+40,in=225-0.25*\x](v1);
}

\node (h) at (\xr+0.5*\msex,-0.0*\msey)[circle, scale=1/10]{};
\draw[very thin] (s)  to (h);
\draw[dotted, very thick] (s)  to (h);

\foreach \x in {-40,-20,...,40}{
\draw[very thin] (h) to [out=\x,in=180-0.25*\x](v2);
\draw[dotted, very thick] (h)to [out=\x,in=180-0.25*\x](v2);
}

\node (h) at (\xr+0.5*\msex,-0.5*\msey)[circle, scale=1/10]{};
\draw[very thin] (s)  to (h);
\draw[dotted, very thick] (s)  to (h);

\foreach \x in {-40,-20,...,40}{
\draw[very thin] (h) to [out=\x-40,in=135-0.25*\x](v3);
\draw[dotted, very thick] (h)to [out=\x-40,in=135-0.25*\x](v3);
}

\node (h) at (\xr+0.5*\msex,-1.5*\msey)[circle, scale=1/10]{};
\draw[very thin] (s)  to (h);
\draw[dotted, very thick] (s)  to (h);

\foreach \x in {-40,-20,...,40}{
\draw[very thin] (h) to [out=\x-40,in=135-0.25*\x](v4);
\draw[dotted, very thick] (h)to [out=\x-40,in=135-0.25*\x](v4);
}

\node (lbl) at (\xr+0.5*\msex-\txt,-1.5*\msey+0.25*\txt)[rotate=-70, lblnode]{$(m^3,M,k'+1)$-feather};

\draw[very thin] (s)  to [out=90,in=120](v1);
\draw[dotted, very thick] (s)  to [out=90,in=120](v1);

\node (lbl) at (\xr+0.5*\msex-\txt,+1.0*\msey+0.25*\txt)[rotate=45, lblnode]{$(k'+1)$-chain};


\node (h) at (\xr+5.5*\msex,0.5*\msey)[circle, scale=1/10]{};
\draw (t)  to (h);
\foreach \x in {-40,-20,...,40}{
\draw[very thin] (h) to [out=\x+180-40,in=135+180-0.5*\x](v1e4);
\draw[dotted, very thick] (h)to [out=\x+180-40,in=135+180-0.5*\x](v1e4);
}

\node (h) at (\xr+5.5*\msex,-0.5*\msey)[circle, scale=1/10]{};
\draw (t)  to (h);
\foreach \x in {-40,-20,...,40}{
\draw[very thin] (h) to [out=\x+180-40,in=135+180-0.5*\x](v2e4);
\draw[dotted, very thick] (h)to [out=\x+180-40,in=135+180-0.5*\x](v2e4);
}

\node (h) at (\xr+5.5*\msex,-1.0*\msey)[circle, scale=1/10]{};
\draw (t)  to (h);
\foreach \x in {-40,-20,...,40}{
\draw[very thin] (h) to [out=\x+180,in=180+180-0.5*\x](v3e4);
\draw[dotted, very thick] (h)to [out=\x+180,in=180+180-0.5*\x](v3e4);
}

\node (h) at (\xr+5.5*\msex,-1.5*\msey)[circle, scale=1/10]{};
\draw (t)  to (h);
\foreach \x in {-40,-20,...,40}{
\draw[very thin] (h) to [out=\x+180+40,in=225+180-0.5*\x](v4e4);
\draw[dotted, very thick] (h)to [out=\x+180+40,in=225+180-0.5*\x](v4e4);
}
\node (lbl) at (\xr+5.5*\msex+1.0*\txt,0.5*\msey-0.75*\txt)[rotate=-70, lblnode]{$(1,M,k'+1)$-feather};

\draw[very thin] (t)  to [out=-90,in=-45](v4e4);
\draw[dotted, very thick] (t)  to [out=-90,in=-45](v4e4);

\node (lbl) at (\xr+5.5*\msex+\txt,-2.0*\msey-0.25*\txt)[rotate=45, lblnode]{$(k'+1)$-chain};


\draw (v1)  to (v1e1);

\node (h) at (\xr+2.35*\msex,1*\msey)[circle, scale=1/10]{};
\draw (v1e1)  to (h);
\foreach \x in {-40,-20,...,40}{
\draw[very thin] (h) to [out=\x,in=180-0.5*\x](v1e2);
\draw[dotted, very thick] (h) to [out=\x,in=180-0.5*\x](v1e2);
}
\node (lbl) at (\xr+2.35*\msex+1*\txt,1*\msey-\txt)[rotate=0, lblnodein]{$(1,M,k'+1)$-feather};

\node (h) at (\xr+3.35*\msex,1*\msey)[circle, scale=1/10]{};
\draw (v1e2)  to (h);
\foreach \x in {-40,-20,...,40}{
\draw[very thin] (h) to [out=\x,in=180-0.5*\x](v1e3);
\draw[dotted, very thick] (h) to [out=\x,in=180-0.5*\x](v1e3);
}

\draw (v1e3) to (v1e4);


\draw (v2)  to (v2e1);
\draw (v2e1)  to (v2e2);

\node (h) at (\xr+3.35*\msex,0*\msey)[circle, scale=1/10]{};
\draw (v2e2)  to (h);
\foreach \x in {-40,-20,...,40}{
\draw[very thin] (h) to [out=\x,in=180-0.5*\x](v2e3);
\draw[dotted, very thick] (h) to [out=\x,in=180-0.5*\x](v2e3);
}

\node (h) at (\xr+4.35*\msex,0*\msey)[circle, scale=1/10]{};
\draw (v2e3)  to (h);
\foreach \x in {-40,-20,...,40}{
\draw[very thin] (h) to [out=\x,in=180-0.5*\x](v2e4);
\draw[dotted, very thick] (h) to [out=\x,in=180-0.5*\x](v2e4);
}


\draw (v3e1)  to (v3e2);
\draw (v3e2)  to (v3e3);
\draw (v3e3)  to (v3e4);

\node (h) at (\xr+1.35*\msex,-1*\msey)[circle, scale=1/10]{};
\draw (v3)  to (h);
\foreach \x in {-40,-20,...,40}{
\draw[very thin] (h) to [out=\x,in=180-0.5*\x](v3e1);
\draw[dotted, very thick] (h) to [out=\x,in=180-0.5*\x](v3e1);
}


\draw (v4e2)  to (v4e3);

\node (h) at (\xr+1.35*\msex,-2*\msey)[circle, scale=1/10]{};
\draw (v4)  to (h);
\foreach \x in {-40,-20,...,40}{
\draw[very thin] (h) to [out=\x,in=180-0.5*\x](v4e1);
\draw[dotted, very thick] (h) to [out=\x,in=180-0.5*\x](v4e1);
}

\node (h) at (\xr+2.35*\msex,-2*\msey)[circle, scale=1/10]{};
\draw (v4e1)  to (h);
\foreach \x in {-40,-20,...,40}{
\draw[very thin] (h) to [out=\x,in=180-0.5*\x](v4e2);
\draw[dotted, very thick] (h) to [out=\x,in=180-0.5*\x](v4e2);
}
\node (lbl) at (\xr+2.5*\msex,-2.5*\msey+0.75*\txt)[rotate=0, lblnode]{$v_4\not\in e_2$};

\node (h) at (\xr+4.35*\msex,-2*\msey)[circle, scale=1/10]{};
\draw (v4e3)  to (h);
\foreach \x in {-40,-20,...,40}{
\draw[very thin] (h) to [out=\x,in=180-0.5*\x](v4e4);
\draw[dotted, very thick] (h) to [out=\x,in=180-0.5*\x](v4e4);
}
\node (lbl) at (\xr+3.5*\msex,-2.5*\msey+0.75*\txt)[rotate=0, lblnode]{$v_4\in e_3$};


\draw[very thin] (v1) -- (v2) -- (v3) -- (v4);
\draw[dotted, very thick] (v1) -- (v2) -- (v3) -- (v4);
\node (lbl) at (\xr+\msex+0.5*\txt,0.5*\msey)[rotate=90,lblnodein]{$(k'+1)$-chain};

\draw[very thin] (v1e1) -- (v2e1) -- (v3e1) -- (v4e1);
\draw[dotted, very thick] (v1e1) -- (v2e1) -- (v3e1) -- (v4e1);

\draw[very thin] (v1e2) -- (v2e2) -- (v3e2) -- (v4e2);
\draw[dotted, very thick] (v1e2) -- (v2e2) -- (v3e2) -- (v4e2);

\draw[very thin] (v1e3) -- (v2e3) -- (v3e3) -- (v4e3);
\draw[dotted, very thick] (v1e3) -- (v2e3) -- (v3e3) -- (v4e3);

\draw[very thin] (v1e4) -- (v2e4) -- (v3e4) -- (v4e4);
\draw[dotted, very thick] (v1e4) -- (v2e4) -- (v3e4) -- (v4e4);

\end{tikzpicture}
\caption{Example for the reduction from \VC\ to \PMSE\ described in \Cref{sec:constr}. Left: The graph in an instance of VC containing vertices $v_1,\ldots,v_4$ and edges $e_1,\ldots,e_4$. Right: The constructed graph in the instance of \PMSE. Herein, $m=4$, $M=12$, and $k'=69\cdot k$.}\label{fig:sketch}
\end{figure}
Let $(G=(V,E),k)$ be an instance of \VC{} and let $n:=|V|$ and $m:=|E|$. We denote $V=\{v_1,\ldots,v_n\}$ and $E=\{e_1,\ldots,e_m\}$. We construct an instance~$(G',s,t,p,k')$ of \PMSE{} as follows. First, define $M:=2\cdot(m+1) + 2$, and $k':=k\cdot(m^3+m+1)$. Below we also call $k'$ the \emph{budget}. The graph~$G'$ is constructed as follows. Refer to \Cref{fig:sketch} for an illustration.

Initially, we set $G'$ to $\boxplus_{n,m+1}$. Each row in $\boxplus_{n, m+1}$ corresponds to a vertex of $G$ and each column beside the first one corresponds to an edge of~$G$. 
More precisely, row $i$ corresponds to vertex~$v_i$ and column~$j+1$ corresponds to edge~$e_j$. 
For each $i\in[n]$, if edge $e_{j}$ is not incident with vertex $v_i$, then we replace the horizontal edge in row~$i$ connecting column~$j$ and~$j+1$ by a $(1,M,k'+1)$-feather. 
We replace all vertical edges by $k'+1$-chains. We call the resulting graph~$\boxplus'$. 
By \emph{row~$i$ of $\boxplus'$} we refer to the subgraph obtained from row $i$ in $\boxplus_{n,m+1}$ by the previously described modifications. 
Similarly, by \emph{the column $j$ of $\boxplus'$} we refer to the subgraph obtained from column~$j$ in $\boxplus_{n,m+1}$ by the previously described modifications. 
Instead of horizontal and vertical edges, we talk about horizontal and vertical connections in~$\boxplus'$, meaning the graphs that replaced the corresponding edges.

We now add the new vertices~$s$ and~$t$ to~$G'$. 
We connect all vertices in column $1$ with $s$ via $(m^3,M,k'+1)$-feathers and all vertices in column $m+1$ with $t$ via $(1,M,k'+1)$-feathers. 
Herein, we merge~$s$ with the end points of the shafts of the feathers and analogously for~$t$.
Last, we add two $(k'+1)$-chains, one connecting $s$ with vertex~$(1,1)$, and the other connecting vertex~$(n,m+1)$ with $t$. Below, we call these paths \emph{validation paths}. 
We denote the finally obtained graph by~$G'$. 

Further, we set the number~$p$ of desired $s$-$t$ routes to~$k\cdot M + (n-k)+1$. This concludes the construction of the instance~$(G', s, t, p, k')$.

\paragraph{Planarity}

The $\boxplus_{n,m+1}$ is planar, feathers and chains are planar as well. Replacing (vertical/horizontal) edges in~$\boxplus_{n,m+1}$ preserves planarity. Connecting~$s$ with all vertices in column~$1$ can be done preserving the planarity, and by symmetry, the same holds for connecting~$t$ with all vertices in the column~$m+1$.

\paragraph{Correctness}\label{sec:correc}

We claim that $(G',s,t,p,k')$ is a yes-instance of \PMSE if and only if $(G,k)$ is a yes-instance of VC.

$(\Rightarrow)$: Suppose that $(G',s,t,p,k')$ is a yes-instance of \PMSE and consider a solution to~$(G',s,t,p,k')$. 
We show that we can construct a vertex cover of size~$k$ in~$G$. 

First we state observations about a solution to~$(G',s,t,p,k')$. The first observation is about~$s$, its incident chain and feathers, and how $k'$ $s$-$t$ routes determine a $k$-vertex subset of~$G$. Note that the degree of $s$ is exactly $(n+1)$. 
At most one $s$-$t$~route contains the validation paths, otherwise there are $k'+1$ shared edges, contradicting the fact that~$(G',s,t,p,k')$ is a yes-instance.
For the same reason, each $(m^3,M,k'+1)$-feather appears in at most $M$~routes. 
If a $(m^3,M,k'+1)$-feather appears in at least two routes, then $m^3$~edges are shared. 
Since the budget allows for $k\cdot m^3$ edges, at most $k$ $(m^3,M,k'+1)$-feathers appear in at least two routes each. 
Since there are $p:=k\cdot M + (n-k)+1$ $s$-$t$~routes, we obtain the following.

\begin{myinv}\label{inv:1}
In any solution to the instance $(G',s,t,p,k')$, there are exactly $k$~feathers connecting~$s$ with the vertices in the first column that contain $M$ $s$-$t$~routes each. 
All the other $n-k$~feathers contain exactly one route each. 
Moreover, the $(k'+1)$-chain incident with~$s$ appears in exactly one $s$-$t$~route.
\end{myinv}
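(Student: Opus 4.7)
The plan is to track how the $p$ routes out of $s$ distribute themselves across the $n+1$ connections incident to $s$, namely the $n$ feathers (one per vertex of column~$1$) and the single $(k'+1)$-chain validation path. I classify the $n$ feathers as \emph{heavy} (carrying two or more routes), \emph{light} (carrying exactly one), or \emph{unused}, and write $a, b, c \ge 0$ for their respective counts, so $a + b + c = n$.

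I would first record two elementary budget constraints. \emph{(i)} No $(k'+1)$-chain can be traversed by two routes simultaneously, for otherwise $k'+1$ edges become shared on that single chain, exceeding the budget. In particular the validation path carries at most one route, and inside each feather the $M$ parallel bundle chains are used by pairwise distinct routes, so each feather carries at most $M$ routes. \emph{(ii)} Any two routes that enter the same feather must both traverse its entire $m^3$-shaft, and so every heavy feather contributes at least $m^3$ edges to the shared set.

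Combining \emph{(ii)} with the budget $k' = k(m^3 + m + 1)$ yields $a \cdot m^3 \le k'$. Since we may assume the \VC{} instance has no isolated vertices (otherwise reduce first), $k \le n \le 2m$, so $k(m+1) + k < m^3$ for all but finitely many $m$ (small cases can be decided directly). This gives the key inequality $a \le k$.

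To finish, I compare the number of routes. By \emph{(i)} the total is at most $aM + b + 1$, while by hypothesis it equals $p = kM + (n-k) + 1$. Rearranging and using $b \le n - a$ gives
\[
(k-a)M \;\le\; b - (n - k) \;\le\; k - a,
\]
and since $M = 2(m+1)+2 \ge 6$ this forces $a \ge k$, hence $a = k$. Walking the inequalities backwards then pins down $b = n - k$, $c = 0$, saturation of each heavy feather with exactly $M$ routes, and usage of the validation chain by exactly one route, which is exactly what \Cref{inv:1} asserts. The only non-routine step is the quantitative one: verifying that the lower bound $m^3$ on shared edges per heavy feather dominates the budget slack $k(m+1) + k$, which is precisely why the shaft length was chosen cubic in $m$ in the construction.
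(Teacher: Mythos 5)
Your proof is correct and follows essentially the same route as the paper: bound the number of ``heavy'' feathers by $k$ via the $m^3$-shaft cost, cap each feather at $M$ routes and the validation chain at one route via the $(k'+1)$-chain argument, and then observe that the route count $p = kM + (n-k) + 1$ forces all these bounds to be tight. Your version is in fact slightly more careful than the paper's, since you explicitly verify the quantitative condition $m^3 > k(m+1)$ (via $k \le n \le 2m$, or equivalently via the standard assumption $k \le m$) that the paper leaves implicit in the sentence ``the budget allows for $k \cdot m^3$ edges.''
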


\noindent We say that the row $i$ is \emph{selected} if the feather connecting $s$ with vertex $(i,1)$ is contained in $M$~routes. 

The second observation is about the number of shared edges in a selected row. Note that each vertical connection in~$\boxplus'$ is a $(k'+1)$-chain and, thus, none of them appears in at least two routes.
Recall that if row~$i$ is selected, then vertex~$(i,1)$ appears in at least $M=2\cdot (m+1)+2$~routes.
Since each vertical connection appears in at most one route, there are at most $2\cdot(m+1)$ routes that can ``leave'' a row via vertical connections. 
This observation together with~\Cref{inv:1} yield the following.

\begin{myinv}\label{inv:2}
In any solution to the instance~$(G',s,t,p,k')$, each selected row~$i$ induces $(m+1)$~shared edges. These shared edges appear only in the horizontal connections in the selected row~$i$ and in the feather connecting vertex~$(i,m+1)$ with~$t$.
\end{myinv}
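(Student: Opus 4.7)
The plan is to combine a local capacity/cut argument inside the selected row with a global budget-tightness argument.

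First, I would establish the basic capacity facts. Every vertical connection in $\boxplus'$ is a $(k'+1)$-chain, and so is every bundle chain of every feather; any two routes using the same chain would share its $k'+1$ edges, exceeding the budget~$k'$, so each such chain carries at most one route. By Observation~1, for a selected row $i$ exactly $M=2(m+1)+2$ routes enter $(i,1)$ via the $(m^3,M,k'+1)$-feather attached to $s$. These $M$ routes must reach $t$; the only ways they can leave row $i$ are through the at most $2(m+1)$ vertical connections at columns $1,\dots,m+1$ and through the $(1,M,k'+1)$-feather at $(i,m+1)$. A counting argument shows that at least $M-2(m+1)=2$ of them exit through the $t$-feather, so its shaft is shared and contributes at least one shared edge.

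Next, for each horizontal connection $h_j$ of row $i$ (with $j\in[m]$), I would run a cut argument. Let $L_j:=\{(i,1),\dots,(i,j)\}$. The $M$ routes entering $L_j$ at $(i,1)$ must eventually leave $L_j$, and the only exits are $h_j$ and the at most $2j$ vertical connections incident with columns~$1,\dots,j$ (each carrying at most one route). Hence $h_j$ is used by at least $M-2j\geq M-2m=4$ routes. Therefore $h_j$ is shared, and it contributes exactly one shared edge: if $h_j$ is a direct edge this is immediate, and if $h_j$ is a $(1,M,k'+1)$-feather the routes must use pairwise distinct bundle chains (again by the budget), so only the 1-edge shaft is shared. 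Summing over the $m$ horizontal connections of row~$i$ and adding the shared $t$-feather shaft, each selected row induces at least $m+1$ shared edges, all lying in the claimed locations.

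Finally, I would close with a global accounting against the budget. Across the $k$ selected rows, the horizontal connections and the $t$-feather contribute at least $k(m+1)$ shared edges, while the $m^3$-shaft of each selected $s$-feather (used by $M\geq 2$ routes by Observation~1) contributes $m^3$ more shared edges, totalling at least $k(m^3+m+1)=k'$. Since the budget is exactly~$k'$, every inequality along the way must be tight: each selected row contributes \emph{exactly} $m+1$ shared edges, located exactly as claimed, and no sharing happens anywhere else in~$G'$ (in particular, not in vertical connections, bundle chains, horizontal connections of unselected rows, or the $s$-feather shafts of unselected rows).

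The main obstacle is not any single calculation but the orchestration: the local lower bounds from the cut and capacity arguments must sum to \emph{exactly} the budget~$k'$, so that the carefully chosen definition $k':=k(m^3+m+1)$ forces equality. Without this budget-tightness step one only obtains the weaker statement that selected rows induce \emph{at least} $m+1$ shared edges, which is not enough to pin down both the count and the location required by the observation.
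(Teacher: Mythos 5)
Your proof is correct and takes essentially the same approach as the paper: the paper likewise notes that vertical connections (being $(k'+1)$-chains) carry at most one route each, so of the $M=2(m+1)+2$ routes entering a selected row at most $2(m+1)$ can escape vertically, forcing every horizontal connection and the $t$-feather shaft to be shared, and then invokes the exact budget $k'=k(m^3+m+1)$ to pin down the count and locations. Your write-up merely makes the per-column cut argument and the budget-tightness step more explicit than the paper's terse version.
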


By \Cref{inv:1} and \Cref{inv:2}, we know that in any solution to~$(G',s,t,p,k')$, there are exactly~$k$~selected rows in~$G'$ and all shared edges appear in the selected rows, in the feathers connecting $s$ with the selected rows, and in the feathers connecting $t$ with the selected rows. 
Let rows~$i_1,\ldots,i_k$ be the selected rows and let $w_1,\ldots,w_k$ be the vertices in $G$ corresponding to the selected rows. Recall that by~\Cref{inv:1} and \Cref{inv:2}, no budget is left. We claim that $W:=\{w_1,\ldots,w_k\}$ is a vertex cover in $G$.

Suppose that $W$ is not a vertex cover in $G$, that is, there is an edge $e_j$ such that $v\cap e_j=\emptyset$ for all~$v\in W$. 
We show that this induces at least one additional shared edge, contradicting the fact that $(G',s,t,p,k')$ is a yes-instance of \PMSE.
If $W$ is not a vertex cover in~$G$, then $(i_\ell,j)$ and $(i_\ell,j+1)$ are connected by a $(1,M,k'+1)$-feather for each $\ell\in[k]$. 
Observe that there are at most $M\cdot k$ routes crossing column $j$ to column $j+1$ over the feathers connecting $(i_\ell,j)$ and $(i_\ell,j+1)$ with $\ell\in[k]$. 
There are $n-k$ remaining horizontal connections to cross column~$j$ to column~$j+1$. Furthermore, all~$p$ $s$-$t$~routes appear in each column. Hence there are at least $n-k+1$ routes that cross column $j$ to column $j+1$ over the $n-k$ remaining horizontal connections (recall that $p = k \cdot M + (n - k) + 1$). 
By the pigeon-hole principle, at least one of these horizontal connections appears in at least two routes.
Since each horizontal connection is either a single edge or a $(1,M,k'+1)$-feather, the two routes induce at least one further shared edge. 
Thus, there at least $k'+1$ edges shared by the $p$~$s$-$t$~routes, which contradicts the fact that $(G',s,t,p,k')$ is a yes-instance of \PMSE. 
It follows that $W$ is a vertex cover in $G$.

$(\Leftarrow)$: Suppose that $(G,k)$ is a yes-instance of VC, and let $W\subseteq V$ be a vertex cover in $G$ with $|W|=k$. 
We show that we can construct $p$ $s$-$t$~routes in $G'$ that share $k'$ edges.

We lead $M$~routes from $s$ to each vertex $(w,1)$ in $G'$ with $w\in W$ and one route to each vertex~$(x,1)$ with $x\in V\backslash W$. 
Note that these are exactly $k\cdot M+(n-k)=p-1$ routes.
Moreover, by the construction of the routes so far, $k\cdot m^3$ edges are shared.
These shared edges appear in the $m^3$-shafts of the $k$~feathers connecting $s$ with the vertices $(w,1)$ in $G'$ with $w\in W$. 

For each row~$i\in[n]$, we lead all the routes containing vertex~$(i,1)$ from vertex~$(i,1)$ to vertex $(i,m+1)$ using only the connections in row~$i$.
Note that this construction of the routes induce $k\cdot (m+1)$ further shared edges: In feathers, only the shafts need to be shared, since we define~$M$ routes and the bundle in each feather contain~$M$ edge-disjoint paths.
Finally, for each row~$i\in[n]$, we lead all routes containing $(i,m+1)$ via the feather incident with vertex $(i,m+1)$ to $t$. 
This construction yields $k$ further shared edges, namely those in the $1$-shafts of the $k$ feathers that connect column $m+1$ with~$t$, each appearing in~$M$ $s$-$t$~routes. 
Observe that, so far, $k'$~edges are shared and, thus, no budget for sharing any further edge is left. 

So far, $p-1$ routes are constructed connecting $s$ with $t$.  
Thus, one $s$-$t$~route remains, that we call the \emph{validation route} and which we construct as follows.
First, we lead the validation route to~$(1,1)$ over the $(k'+1)$-chain connecting $s$ with vertex~$(1,1)$.
Next, we route the validation route through $\boxplus'$ as follows. 
Since $W$~is a vertex cover in~$G$, for each edge $e_j$ there exists a vertex~$v_i$ in~$W$ such that~$v_i\in e_j$. 
Thus, by construction of $G'$ it holds that $(i,j)$ and $(i,j+1)$ are connected by an edge. 
By the construction of the $p-1$ $s$-$t$~routes before, it holds that the edge connecting $(i,j)$ with $(i,j+1)$ is shared by exactly $M$~paths. 
Thus, we can find in every column~$j\in[m]$ exactly one index~$i_j$ corresponding to row~$i_j$ such that $(i_j,j)$ and $(i_j,j+1)$ are connected by an edge that is shared by $M$ $s$-$t$~routes. 
We lead the validation route in each column~$j\in[m]$ to the row~$i_j$ using the vertical connections, and then over the shared edge $\{(i_j,j),(i_j,j+1)\}$ to column~$j+1$. 
In column~$m+1$, we lead the validation route via the vertical connections to $(n,m+1)$. 
Finally, we lead the validation route over the $(k'+1)$-chain connecting $(n,m+1)$ with $t$ to $t$.
Note that in the construction of the validation route, we do not share any additional edge.

We constructed $p$ $s$-$t$~routes sharing $k'$ edges in $G'$ and, thus, $(G',s,t,p,k')$ is a yes-instance of \PMSE.
\paragraph{Maximum degree at most four}
We now make the following modifications to the graph~$G'$ and the budget~$k'$ in the instance of \PMSE\ constructed above. We first replace each bundle in the graph~$G'$ by a rainbow. (For this to yield an equivalent instance, we need to subdivide edges in $G'$ before.) Then, we replace the high-degree vertices~$s$ and~$t$ by binary trees, in a similar fashion as done by \citet[Theorem 5.2]{Flu15}.

We aim to replace each bundle in~$G'$ by a rainbow. Since replacing a bundle by a rainbow may introduce additional shared edges in a solution, we have to increase the budget~$k'$ as well. However, increasing the budget may allow to share new edges outside of rainbows which we did not intend to be shareable. To circumvent this issue, we first subdivide each edge in~$G'$ several times and, only after the subdivision, replace bundles by rainbows. Subdividing edges has the effect that the number of shared edges in any solution is a multiple of the number~$b$ of subdivisions. Hence, if $b$ is larger than the increase of the budget when replacing the bundles by rainbows, no further edges other than the ones in the rainbows can be shared. We now formalize this approach.

We introduce the following notation. Let $H$ be a graph. A \emph{proper chain} in $H$ is an induced path~$P$ in~$H$ such that all inner vertices of~$P$ have degree exactly two in~$H$. If the endpoints of~$P$ each have degree different from two in~$H$, then we call $P$ a \emph{maximal} proper chain. 

To replace bundles by rainbows, we use the following claim. 
\begin{myclaim}\label{claim:rainbow}
  Let $H$ be a graph such that each maximal proper chain in~$H$ has length at least~$b$. 
  Let~$c$ be the number of $(a, d)$-bundles in $H$, such that~$2ac < b$, and let~$p, k$ be integers such that $d > k$. 
  Let $H'$ be the graph obtained from~$H$ by replacing each $(a, d)$-bundle by a $(a, d + 2ac)$-rainbow. 
  Graph~$H$ admits $p$ $s$-$t$ routes with at most~$k$ shared edges if and only if $H'$ admits $p$ $s$-$t$ routes with at most~$k + 2ac$ shared edges. 
\end{myclaim}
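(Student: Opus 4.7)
The plan is to prove the two implications separately, using a natural correspondence between bundles in $H$ and rainbows in $H'$.

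For the forward direction, I would start from $p$ routes in $H$ with at most $k$ shared edges and observe that no chain of any $(a,d)$-bundle is used by two or more routes: such sharing would contribute all $d>k$ edges of the chain at once, exceeding the budget. I then lift the routes to $H'$ by rerouting each bundle crossing along the correspondingly indexed rung of the replacement rainbow, adding the spine edges needed to reach that rung from the bundle's endpoints. Sharing outside the rainbows is unchanged, and the only new sharing lives on spine edges. Since each rainbow contributes at most $2a$ spine edges, this yields an increase of at most $2ac$ over the $c$ rainbows, producing an $H'$-solution with at most $k+2ac$ shared edges.

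For the backward direction, I would start from $p$ routes in $H'$ with at most $k+2ac$ shared edges. The symmetric length argument gives that no rung is used by two or more routes, since a rung has length $d+2ac>k+2ac$; and the ladder-like topology of a rainbow forces any simple path between its two external vertices to cross exactly one rung. I translate routes back to $H$ by collapsing each rainbow traversal into the correspondingly indexed bundle chain. In the translated $H$-solution, no chain is shared (each rung was used at most once), so the shared edges are precisely the shared non-rainbow edges of the $H'$-solution.

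The main obstacle is bounding these shared non-rainbow edges by $k$. Here I invoke the chunk structure provided by the hypothesis that every maximal proper chain of $H$ has length at least $b$: since any $s$-$t$ route traverses such a chain entirely or not at all, a shared non-rainbow edge forces its whole containing chain to be shared, contributing at least $b$ shared edges. The total shared non-rainbow amount is therefore a sum of chain lengths, each at least $b$. Combining this with the budget bound and the hypothesis $2ac<b$, so that $k+2ac<k+b$, any value of this sum strictly above $k$ would require an additional chunk of size at least $b$ and push the total past $k+2ac$, a contradiction. Making this ``multiples-of-$b$ versus slack $2ac$'' counting precise, while carefully bookkeeping how maximal proper chains of $H$ may be extended across spine edges in $H'$, is the delicate step.
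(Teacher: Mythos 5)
Your proposal follows the paper's proof essentially step for step: the same length arguments ($d>k$ in $H$, resp.\ $d+2ac>k+2ac$ in $H'$) rule out shared chains and rungs, the same translation of routes via distinct rungs bounds the extra sharing by $2ac$, and the same ``shared edges come in whole maximal proper chains of length at least $b>2ac$'' counting closes the backward direction. The delicate step you flag at the end is exactly what the paper settles with the bound $b\lfloor(k+2ac)/b\rfloor\le k$, so there is no substantive divergence to report.
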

\begin{proof}
  $(\Rightarrow)$: Assume that $H$ admits $p$ $s$-$t$ routes with at most~$k$ shared edges. 
  Note that, since $d > k$, there are no two routes that share one of the $d$-chains in any bundle. 
  Transform this set of routes into a set of routes in~$H'$ as follows. 
  Outside of any rainbow, each route in~$H$ equals one route in~$H$. 
  Inside a rainbow, lead each route over a distinct $d$-chain. 
  In this way, at most~$c \cdot 2a$ more edges are shared. 
  That is, $H'$ admits $p$ $s$-$t$ routes with at most $k + 2ac$ shared edges.
  
  $(\Leftarrow)$: Assume that $H'$ admits $p$ $s$-$t$ routes with at most $k + 2ac$ shared edges. 
  Note that, since $d + 2ac > k + 2ac$, there are no two routes that share one of the $d$-chains in any rainbow. 
  Transform a corresponding set of routes in~$H'$ into a set of routes in~$H$ as follows. 
  Outside of any bundle, each route in~$H$ equals one route in~$H'$. 
  Inside a bundle, lead each route over a distinct $d$-chain. 
  This yields a set of $s$-$t$ routes in~$H$ with at most~$k + 2ac$ shared edges. 
  Since each maximal proper chain in $H$ has length at least $b > 2ac$ and since each maximal proper chain is either shared completely or not at all, indeed, there are at most~$b \lfloor (k + 2ac) / b \rfloor  \leq k$ shared edges.
\end{proof}

Now consider $G'$ and $k'$ from the instance of \PMSE. Note that each bundle in $G'$ is an $(M, k' + 1)$-bundle. We now to replace each of these bundles by rainbows using \Cref{claim:rainbow}. To satisfy the precondition of \Cref{claim:rainbow}, we need that each maximal proper chain in~$G'$ has length at least~$b' > 2Mc'$, where $c'$ is the number of $(M, k' + 1)$-bundles in~$G'$. For this, we perform the operation ``subdivide each edge in~$G$ and multiply $k'$ by two'' sufficiently often. 
Note that this operation yields an equivalent instance, because each $P_3$ resulting from subdividing an edge has to be traversed either completely or not at all by each route. 
Furthermore, each application of this operation doubles the minimum length~$b$ of a maximal proper chain in~$G'$, meaning that, after $O(\log(Mc'))$ applications, we have~$2Mc' < b$. Note that subdividing and multiplying~$k'$ by two does not invalidate the property that each bundle is a $(M', d')$-bundle for some $d' > k'$. Thus, we can replace all bundles in~$G'$ by rainbows and $k'$ by $k' + 2Mc'$, yielding an equivalent instance. 
Clearly, none of the above operations increases the degree of any vertex. 
On the contrary, after all operations have been applied, each vertex except~$s$ and~$t$ has degree at most 4.

To decrease the degree of $s$ and~$t$, we replace $s$ and $t$ by a complete binary trees as follows. 
Recall that the number of neighbors of $s$ and $t$ is $n + 1$ each. 
Assume that the number~$n$ of vertices in the instance of \VC\ is such that $n + 1$ is a power of two. 
Otherwise, add degree-zero vertices until this is the case. 
Replace $s$ with a complete binary tree with root~$s$ and $(n + 1)/2$ leaves. 
Make incident each previous neighbor of~$s$ in~$G'$ with one of the leaves of the complete binary tree in such a way that each leaf has degree exactly three. 
Replace $t$ with a complete binary tree in the same way, and replace~$k'$ by $k' + 2\ell$, where $\ell$ is the number of edges in a complete binary tree with $(n + 1)/2$ leaves. 
To see that the resulting instance is yes if and only if the original instance is yes, note that in any solution for~$G'$, each row of $\boxplus'$ receives at least one $s$-$t$ route from~$s$ and sends at least one $s$-$t$ route to~$t$. 
Thus, in any solution all $2\ell$ edges in the complete binary trees are shared. 
Finally, it is clear that each of the above operations can be performed while maintaining planarity.

\subsection{Directed graphs}

Let $G''$ be the planar graph of maximum degree four constructed in \cref{sec:theo!1}.
We now sketch how to modify graph~$G''$ in such a way that an equivalent instance of \pmsetsc{} on directed planar graphs is obtained.
To this end, we direct all edges except those in vertical connections from ``left to right'' with respect to a drawing as in~\Cref{fig:sketch}.
Herein, we direct the edges in the binary tree containing vertex~$s$ from~$s$ to the leaves, and the edges in the binary tree containing vertex~$t$ from the leaves to~$t$.
We replace each vertical connection by a directed graph gadget as follows (refer to~\Cref{fig:dirplanargr} in the following).
\begin{figure}[t]
\centering
  \begin{tikzpicture}

  \usetikzlibrary{decorations.pathreplacing}
  \usetikzlibrary{decorations.pathmorphing}

  \def\x{8};
  \def\xs{0.75};
  \def\y{1.75};
  \def\sx{1.7};
  \tikzstyle{xnode}=[circle, scale=0.8, draw];
  \tikzstyle{xxnode}=[circle, scale=0.5, draw];

  \node (v) at (0,\y)[xnode]{$v$};
  \node (w) at (0,-\y)[xnode]{$w$};
  \node (ld) at (0,\y+\xs)[]{$\vdots$};
  \node (ld) at (0-\xs,\y)[]{$\cdots$};
  \node (ld) at (0+\xs,\y)[]{$\cdots$};
  \node (ld) at (0,-\y-\xs)[]{$\vdots$};
  \node (ld) at (0-\xs,-\y)[]{$\cdots$};
  \node (ld) at (0+\xs,-\y)[]{$\cdots$};

  \draw[ultra thin] (v) -- (w);
  \draw[dotted, ultra thick] (v) -- (w);

  \draw[decorate, decoration={brace, amplitude=4pt}, thin, color=gray] (0-0.5*\xs,-\y+0.3)--(0-0.5*\xs,\y-0.3) node[midway,label=180:{$(k'+1)$-chain}]{};

  \draw[decorate, decoration={snake, segment length=3mm, amplitude=0.5mm},->,>=stealth] (0.4*\x-0.75,0) -- (0.4*\x+0.75,0);

  \node (v) at (0+\x,\y)[xnode]{$v$};
  \node (w) at (0+\x,-\y)[xnode]{$w$};
  \node (ld) at (0+\x,\y+\xs)[]{$\vdots$};
  \node (ld) at (0-\xs+\x,\y)[]{$\cdots$};
  \node (ld) at (0+\xs+\x,\y)[]{$\cdots$};
  \node (ld) at (0+\x,-\y-\xs)[]{$\vdots$};
  \node (ld) at (0-\xs+\x,-\y)[]{$\cdots$};
  \node (ld) at (0+\xs+\x,-\y)[]{$\cdots$};

  \node (s1) at (0+\x-\sx,0)[xxnode,label=180:{$a_{vw}$}]{};
  \node (s2) at (0+\x+\sx,0)[xxnode,label=0:{$b_{vw}$}]{};

  \draw[->,>=stealth] (v) to (s1);
  \draw[->,>=stealth] (w) to (s1);
  \draw[->,>=stealth] (s2) to (v);
  \draw[->,>=stealth] (s2) to (w);
  \draw[->,>=stealth] (s1) to (s2);

  \foreach \z in {0.3,0.45,...,3.35}{
  \draw[->,>=stealth] (\x-\sx+\z-0.15,0) to (\x-\sx+\z,0);
  }

  \draw[decorate, decoration={brace, amplitude=4pt}, thin, color=gray] (\x-\sx+0.2,0+0.3)--(\x+\sx-0.2,0+0.3) node[midway,label=90:{directed $(k'+1)$-chain}]{};

  \end{tikzpicture}
\caption{Replacement of a undirected vertical connection by the directed graph gadget.}\label{fig:dirplanargr} 
\end{figure}
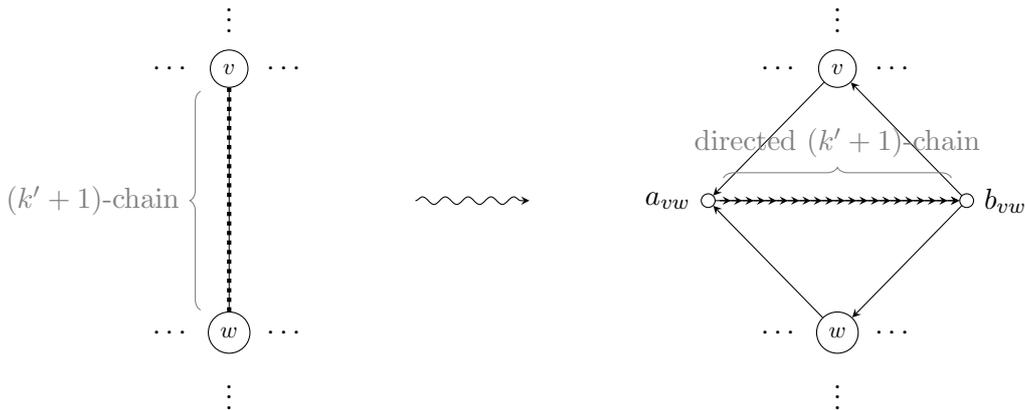

Consider a vertical connection between two vertices $v$ and $w$ (recall that $v$ and $w$ are in consecutive rows).
Remove the $(k'+1)$-chain, and add two vertices $a_{vw}$ and $b_{vw}$.
Connect $a_{vw}$ with $b_{vw}$ via a $(k'+1)$-chain, and direct all edges towards $b_{vw}$.
Finally, add the arcs $(v,a_{vw})$ and $(w,a_{vw})$, as well as the arcs $(b_{vw},v)$ and $(b_{vw},w)$. 
We apply this to each vertical connection in $G''$.

Observe that no two routes can traverse a gadget without sharing at least $k'+1$ arcs.
Moreover, any route in row~$i$, $1\leq i<n$, can traverse through each gadget in each column to row~$i+1$, and vice versa.
Thus, the introduced gadgets work like the vertical $(k'+1)$-chains in~$G''$.

Analogously to the proof of~\Cref{theo!1} we obtain the following.

\begin{mycor}
 \pmsetsc{} is NP-hard on directed planar graphs of maximum out- and indegree three.
\end{mycor}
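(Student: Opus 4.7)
The plan is to verify that the directed instance described above is equivalent to the undirected instance from the proof of~\Cref{theo!1}, and then to confirm planarity and the in-/out-degree bound. The bulk of the work is the equivalence, which I would reduce to the argument already given for~\Cref{theo!1} by showing that each vertical gadget plays the same role in a solution as the $(k'+1)$-chain it replaces.

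More precisely, I would first establish the following key claim: in any solution of total sharing at most~$k'$, each vertical gadget is traversed by at most one route, which may use it to move from~$v$ to~$w$ or from~$w$ to~$v$. The ``at most one'' part is immediate because any route entering the gadget must pass through the entire directed $(k'+1)$-chain from~$a_{vw}$ to~$b_{vw}$, so two routes using the gadget in any combination of directions share all $k'+1$ arcs, already exhausting the budget. Bidirectionality is immediate from the construction: entering at~$v$ and exiting at~$w$, or vice versa, are both possible, so the routing flexibility of the original chain is preserved.

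With this claim in hand, I would mirror the two directions of the proof of~\Cref{theo!1}. For the forward direction, the $p-1$ ``horizontal'' routes from that proof use only horizontal grid edges and the feathers incident with~$s$ and~$t$; since all of these are oriented from left to right (and from~$s$ to the leaves, and from the leaves to~$t$ in the binary trees), these routes remain valid. The single validation route is then routed through vertical gadgets as in the undirected case, each gadget being used by this one route only, so no additional shared arcs arise. For the backward direction, \Cref{inv:1} and~\Cref{inv:2} transfer directly: the counting of $s$-incident feathers used by multiple routes depends only on sharing costs and not on orientation, and the bound of~$2(m+1)$ vertical connections per row still holds since each gadget carries at most one route by the key claim. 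The concluding pigeonhole argument on horizontal connections between consecutive columns is also direction-agnostic.

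The remaining verifications are routine. Planarity is preserved because orientation does not alter an embedding, and each gadget is drawable in the region of the chain it replaces. For the in-/out-degree bound, I would case-split by vertex type: each grid vertex receives at most one horizontal in-arc and one horizontal out-arc, plus one out-arc to each incident~$a_{vw}$ and one in-arc from each incident~$b_{vw}$ (at most two of each), yielding at most three in either direction; the vertices~$a_{vw}$ and~$b_{vw}$ have in/out-degrees $(2,1)$ and $(1,2)$; chain vertices and the oriented binary-tree vertices at~$s$ and~$t$ clearly satisfy the bound. I do not expect a genuine obstacle here; the only point worth double-checking carefully is that no unintended path inside a gadget can invalidate the key claim above.
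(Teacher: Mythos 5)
Your proposal is correct and follows essentially the same route as the paper: the key observation that any route using a gadget must traverse the entire directed $(k'+1)$-chain from $a_{vw}$ to $b_{vw}$ (so two routes through it would exhaust the budget), together with the bidirectional usability of the gadget, is exactly the paper's justification that the gadgets emulate the undirected vertical connections, after which the argument of \Cref{theo!1} is invoked verbatim. Your explicit degree accounting and the remark about ruling out unintended paths inside a gadget are slightly more detailed than the paper's sketch but do not change the approach.
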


\section{Conclusion}
We proved that \pmsetsc is NP-hard, even on planar graphs of maximum degree four, leading to the natural question, whether the problem remains NP-hard on planar graphs of maximum degree three. Furthermore, we showed that \pmsetsc is NP-hard on directed planar graphs of maximum out- and indegree three.

From a parameterized complexity perspective, since our reduction is not a parameterized reduction with respect to the number~$k$ of shared edges, the question about the parameterized complexity of \PMSE\ parameterized by~$k$ remains open.
Moreover, it would be interesting to see whether the running times of known FPT-algorithms~\cite{YeLLZ13,FluschnikKNS15} for \msetsc{} can be improved for \pmsetsc.


%
\bibliographystyle{plainnat}
\bibliography{MSE_planar_bib}

\end{document}